\documentclass[final,authoryear,times]{elsarticle}
\usepackage{url,hyperref}
\usepackage[english]{babel}
\usepackage{graphicx} 
\usepackage{subcaption}
\usepackage{amsmath,amsthm,amsfonts,bbm,bm}
\usepackage{hyperref}
\usepackage{todonotes}
\setuptodonotes{fancyline,color=green!50}
\graphicspath{{Images/}}
\usepackage{float}

\theoremstyle{plain}
\newtheorem{thm}{Theorem}

\newtheorem{prop}[thm]{Proposition} 
\theoremstyle{definition}

\theoremstyle{remark}
\newtheorem{rem}[thm]{Remark}

\newcommand{\RR}{\mathbbm{R}}
\newcommand{\NN}{\mathbbm{N}}

\begin{document}

\title{Mathematical Modeling of Early Embryonic Cell Cycles of \emph{Drosophila melanogaster}}

\author[math]{Meskerem Abebaw Mebratie}\ead{mebratie@mathematik.uni-kassel.de}
\author[bio]{Benedikt Drebes}\ead{Benedikt-Drebes@uni-kassel.de}
\author[bio]{Katja Kapp}\ead{k.kapp@uni-kassel.de}
\author[bio]{Arno M\"uller}\ead{h.a.muller@uni-kassel.de}
\author[math]{Werner M. Seiler}\ead{seiler@mathematik.uni-kassel.de}

\address[math]{Institut f\"ur Mathematik, Universit\"at Kassel, 34109 Kassel}
\address[bio]{Institut f\"ur Biologie, Universit\"at Kassel, 34109 Kassel}

\date{\today}

\begin{abstract}
  In the early stages of development, \textit{Drosophila melanogaster}
  embryos possess very fast and well-coordinated cell cycles. In the cell
  cycle, CDK activity is essentially regulated by binding CDK and CycB to
  form an active complex and by phosphorylating CDK via CDC25 and
  dephosphorylating it via Wee1.  We develop a mathematical model for the
  embryonic cell cycle which is biochemically sound and which can be
  rigorously analysed after a model reduction.  We show that there exists a
  region in the parameter space where the model describes oscillations.  We
  then focus on the role of two parameters: the CycB synthesis and the
  activation coefficient of APC.  Our main biological hypothesis is that
  the first one is responsible for the period lengthening over the first 14
  cycles which can be experimentally observed and this hypothesis is
  supported by numerical simulations of our model: if the CycB synthesis is
  made time-dependent with a prescribed dynamics, then our simulations show
  qualitatively a very similar behavior to experimental data reported in
  the literature.
\end{abstract}

\maketitle

\section{Introduction}\label{II}

Life is characterized by several properties, including growth, metabolism,
and the ability to reproduce.  The ability to reproduce is essential for
complex organisms as well as individual cells.  At a cellular level, this
attribute is enabled by the cell cycle. The cell cycle controls cell
growth, the duplication of genetic information, and cell division. The cell
cycle is divided into two phases: interphase and mitosis.  The interphase
is further subdivided into two gap phases (G phases) and the synthesis
phase (S phase) in between.  G1 is the gap phase during which cells grow
and prepare for DNA synthesis.  During the S phase, the centrosomes
duplicate and the DNA is replicated.  G2 is the second gap phase, during
which the cell continues to grow and prepares for cell division.  After the
G2 phase, the cell enters mitosis.  During mitosis, the replicated
chromosomes are distributed into two daughter cells.  This process can be
subdivided into five phases: prophase, metaphase, anaphase, telophase, and
cytokinesis \citep{matthews2022cell}.

The embryo of the fruit fly \emph{Drosophila melanogaster} has been used
extensively to study various cell biological aspects of the cell cycle.
This model offers several advantages, such as the ability to use a range of
genetic tools to investigate the function of individual proteins. The first
13 cell cycles in \emph{D. melanogaster} are characterized by the absence
of G1 and G2 phases, and by the absence of cytokinesis and thus provides a
simplified model for the regulation of M and S phase cycling
\citep{yuan2016timing}.  Since the cell cycle consists only of the S phase
and mitosis, this cycle can be used to study the regulation of these two
phases (see Figure \eqref{RPa}).
  
The choice of this family of regulatory networks was motivated by the
investigation of early embryonic cell cycles in the model organism
\emph{D. melanogaster}, which has been widely used in developmental biology
\citep{wilson1995drosophila}.  In the early stages of development,
\emph{D. melanogaster} embryos display very fast and well-coordinated cell
cycles.  Precise temporal controls and other intrinsic molecular mechanisms
govern these cycles.  Numerous developmental biology laboratories have
studied these mechanisms, and many biologists have contributed to this
effort \citep{deneke2016waves}, \citep{farrell2014egg},
\citep{shindo2021excess}, \citep{shindo2021modeling},
\citep{yuan2016timing}.

\begin{figure}[H]
  \centering
  \begin{subfigure}{0.5\textwidth}
    \centering
    \includegraphics[width=\textwidth]{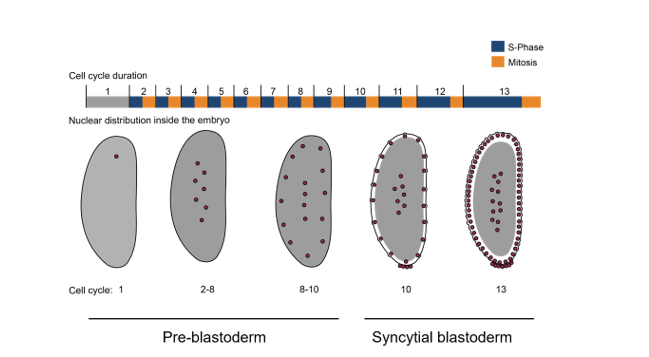}
    \caption{The timing of the cell cycles and the first 13 cell cycles of
      \emph{D. melanogaster} along with nuclei position, modified according
      to \citep{farrell2014egg}.}\label{RPa}
  \end{subfigure}\quad
  \begin{subfigure}{0.45\textwidth}
    \centering
    \includegraphics[width=\textwidth]{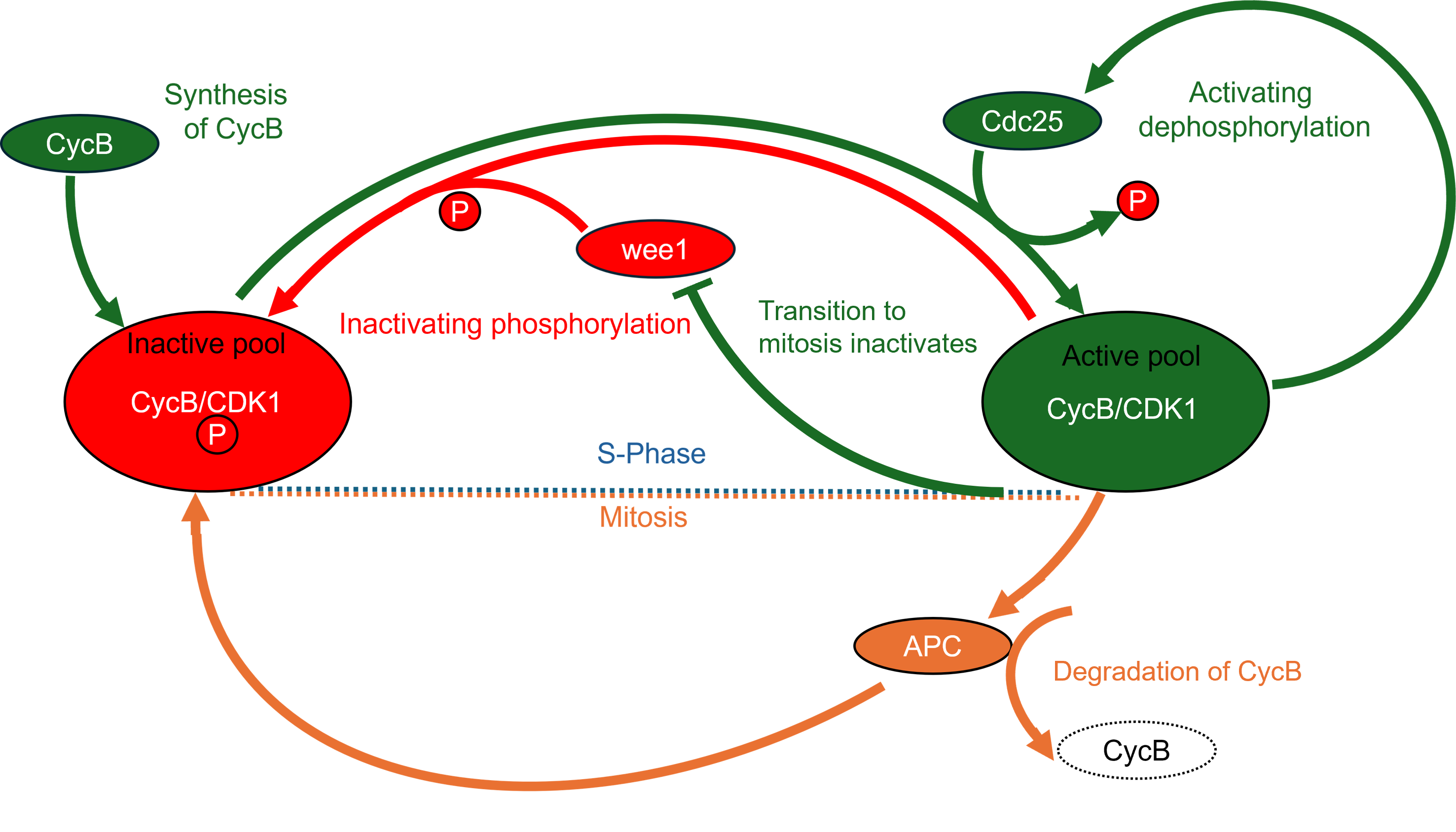}
    \caption{Regulatory processes impinging on $CDK1$ kinase activity.}\label{RPb}
  \end{subfigure}
   \caption{Timing and regulation of the early embryonic \emph{D. melanogaster}.}
  \label{RPDeme}
\end{figure}

In this work, we develop a biochemically sound model of CDK activation in
the form of a parametric system of ordinary differential equations and
analyze it mathematically.  One of the main questions in the analysis is
the existence of a region of positive parameter values for which the system
possesses limit cycles, i.e.\ shows oscillatory behavior.  For systems in
more than two dimensions, a direct proof of the existence of a limit cycle
is known to be a challenging problem, in particular for a larger number of
parameters.  We therefore instead identify parameter values at which a Hopf
bifurcation \citep{marsden2012hopf}, \citep{rionero2019hopf} occurs, as
limit cycles in parametric systems typically stem from such a bifurcation.
In experiments \citep{deneke2016waves}, one observes that the length of the
early embryonic cell cycles increases from cycle to cycle and our
biological hypothesis is that this effect is controlled by the parameter
representing the cyclin synthesis.  We show that our model can reproduce
such a behavior, if one assumes that the cyclin synthesis is slowly
decreasing and thus provide mathematical support for the biological
hypothesis.

The paper is organized as follows: In Section~\ref{III}, we develop our
mathematical model for the early embryonic cell cycles of \emph{D.
  melanogaster}.  In Section~\ref{IV}, we apply several reduction
techniques to obtain a smaller model that can be better analyzed
mathematically.  The analysis is the topic of Section~\ref{sec:bifana}: we
first search for biologically relevant steady states of the reduced system
and then perform a bifurcation analysis for two key parameters.  In
Section~\ref{exr}, we compare numerical simulations of our model with
experimental results. Finally, some conclusions are given.

\section{Model Development}\label{III}

For the cell cycle to proceed correctly, cells require precise temporal
control. All the molecules that control M and S phases are present in the
fly embryo as mRNA or proteins within a common cytoplasm.  Cell cycle
progression depends on the synthesis and destruction of Cyclins and the
transition from the S phase to M phase is controlled by the complex of
Cyclin B (CycB) and Cyclin-dependent kinase 1 (Cdk1)
\citep{pluta2024cyclin}.  The kinase activity of the CycB/Cdk1 complex is
both activated and inactivated via phosphorylation \citep{blank2025cdk}.
The inactivating phosphorylations are carried out by the Wee1 kinase (Wee1)
at the Threonine 14 (T14) and Tyrosine 15 (Y15)
\citep{stumpff2004drosophila}.  These inhibitory phosphorylations can be
removed by the phosphatase Cell Division Cycle 25 (Cdc25), thereby
activating the complex \citep{lew1996regulatory}. The activated CycB/Cdk1
complex in turn activates Cdc25 and thus triggers a positive feedback loop
\citep{lu2012multisite}.  When a certain amount of the CycB/Cdk1 complex is
activated, the cell can enter mitosis. The transition to mitosis
inactivates Wee1.  During mitosis, the CycB/Cdk1 complex activates the
anaphase-promoting complex (APC) via phosphorylation of Cdc20, which
eventually results in the activation of APC \citep{kataria2019interplay}.
The activation of APC allows the cell to proceed from metaphase to
anaphase.  The activated APC degrades CycB via ubiquitylation and thereby
destroys the CycB/Cdk1 complex \citep{acquaviva2006anaphase}.  At the onset
of the next cell cycle, CycB is translated again from maternally provided
CycB mRNA and will form a novel complex with Cdk1; therefore the
translation and destruction of CycB represent the temporal regulation of
this simple cell cycle (see Figure \eqref{RPb}).

The simplification of the syncytial cell cycles in the fly embryo to only S
and M phases provides the basis for very rapid cell cycles with a duration
of about 9 minutes.  From the 10th cell cycle onwards, the cell cycle
durations increase gradually to a length of eventually 21 minutes for the
13th cell cycle \citep{foe1983studies}.  Coincindently with the extension
of the cell cycle length, a decrease in the amount of maternal CycB mRNA
can be observed at the beginning of the cell cycle \citep{edgar1994early}.
The decrease in CycB concentration and the regulation of CycB/Cdk1 by Wee1,
Cdc25 and APC are used to create a model that explains the change in the
cell cycle length.  All of these signals exhibit significant time delays in
their functioning, which seem to have important biological implications.
Moreover, the CDK network is susceptible to external checkpoint controls
\citep{murray1992creative}.

The regulatory system, as depicted in Figure~\ref{RPDeme}, will be
transformed into a system of ordinary differential equations.  As
illustrated in Figure~\ref{RPDeme}, we assume that the substrates (CDC25,
Wee1) play a major role over the proteins (phosphatases and kinases) and
that other proteins have indirect effects on CDK activity.  If not, the
substrates competing for the same kinase would interfere in intricate ways.
These assumptions make the model much simpler.  We use Hill functions to
model activating and inhibiting effect in a standard way.  As a result, the
Hill functions we apply to describe the dynamics of CDC25, Wee1, and APC
depend on their roles in the CDK activity.

Hence, the equation for the protein CDC25 is
$\frac{d[CDC25]}{dt}=\theta_5\frac{[CDK]^{n_2}}{k_2^{n_2}+[CDK]^{n_2}}-\alpha_3[CDC25]$
with the activation coefficient $k_2$, the maximal expression level of the
promoter $\theta_5$, the degradation rate $\alpha_{3}$, and the Hill
coefficient $n_2$.  For the protein Wee1, that the CDK-cyclin active
complex inhibits, we use
$\frac{d[Wee1]}{dt}=\theta_6\frac{k_1^{n_1}}{k_1^{n_1}+[CDK]^{n_1}}-\alpha_4[Wee1]$
where the inhibition rate is $\theta_6$, the inhibition coefficient $k_1$,
the degradation rate $\alpha_4$, and the Hill coefficient $n_1$, which
measures the process switch-like nature.  The protein $APC$ is activated by
the active complex $CDK$ and its dynamic is modelled as
$\frac{d[APC]}{dt}=\theta_4\frac{[CDK]^{n_3}}{k_3^{n_3}+[CDK]^{n_3}}-\alpha_4[APC]$
where $k_3$ is the activation coefficient, $\theta_4$ the maximal level of
the promoter, $\alpha_2$ the degradation rate and $n_3$ the Hill
coefficient.  The difference between the rates of synthesis and degradation
of cyclin determines the rate at which its concentration changes over
time. When APC is present, cyclin is degraded; this relationship can be
explained by the following equation,
$\frac{d[\text{CycB}]}{dt} = b_1 - b_7[\text{CycB}][\text{APC}]$ where the
rates of cyclin synthesis and degradation in the presence of APC are $b_1$
and $b_7$, respectively.

Considering all the above, the mathematical model for the central
CDK--cyclin system is then described by the following differential system
\begin{align}
    \begin{split}
        \frac{d[CycB]}{dt}&=b_1-b_7[CycB][APC]\,,\\
        \frac{d[CDK]}{dt}&=b_2[CycB][CDK_I]+\theta_2[CDC25][CDK_I]\\
        &-\Bigl(\theta_3[Wee1]+\alpha_1[APC]\Bigr)[CDK]= -\frac{d[CDK_I]}{dt}\,,\\
       \frac{d[APC]}{dt}&=\theta_4\frac{[CDK]^{n_3}}{[CDK]^{n_3}+k_3^{n_3}}-\alpha_2[APC]\,,\\
        \frac{d[CDC25]}{dt}&=\theta_5\frac{[CDK]^{n_2}}{[CDK]^{n_2}+k_2^{n_2}}-\alpha_3[CDC25]\,,\\
        \frac{d[Wee1]}{dt}&=\theta_6\frac{k_1^{n_1}}{k_1^{n_1}+[CDK]^{n_1}}-\alpha_4[Wee1]\,,
    \end{split}
    \label{2mm}
\end{align}
where $CDK$ denotes the active $CDK$ and $CDK_I$ the inactive $CDK$. It is
a system of six differential equations depending on fifteen parameters and
three Hill coefficients.

\section{Model Reductions}\label{IV}

In the form \eqref{2mm}, the system is still too large for a rigorous
mathematical analysis.  We therefore apply various methods for a model
reduction to obtain a better analysable model.  We first note that $CDC25$
appears outside of its own equation only in the equations for
$CDK/CDK_{I}$.  As explained above, its effect on the $CDK$ activity is
described by the Hill function $\frac{[CDK]^{n_2}}{[CDK]^{n_2}+k_2^{n_2}}$
and hence we can replace the second term of the second equation of
\eqref{2mm} by $\theta_2\frac{[CDK]^{n_2}}{[CDK]^{n_2}+k_2^{n_2}}[CDK_I]$
effectively eliminating $CDC25$.  The equation for $CDK$ now becomes
\begin{multline*}
  \frac{d[CDK]}{dt}=
  b_2[CycB][CDK_I]+\theta_2\frac{[CDK]^{n_2}}{[CDK]^{n_2}+k_2^{n_2}}[CDK_I]-{}\\
  \Bigl(\theta_3[Wee1]+\alpha_1[APC]\Bigr)[CDK]
  =-\frac{d[CDK_I]}{dt}
\end{multline*}
and there is no need to keep a differential equation for $CDC25$.

Next we note that the equality $\frac{d[CDK]}{dt}=-\frac{d[CDK]}{dt}$
implies that the overall concentration $[CDK_{total}] =[CDK]+[CDK_I]$ is a
conserved quantity.  Writing $[CDK_I]$ as $[CDK_I]=[CDK_{total}] -[CDK] $
and dividing both sides by $[CDK_{total}] $, we can eliminate the
differential equation for $[CDK_I]$, leading to the system
\begin{align}
    \begin{split}
        \frac{d[CycB]}{dt}&=b_1-b_7[CycB][APC]\,,\\
        \frac{d[\overline{CDK}]}{dt}&=b_2[CycB]\Bigl(1-[\overline{CDK}]\Bigr)+
        \theta_2\frac{[\overline{CDK}]^{n_2}}{[\overline{CDK}]^{n_2}+k_2^{n_2}}\Bigl(1-[\overline{CDK}]\Bigr)\\
        &-\Bigl(\theta_3[Wee1]+\alpha_1[APC]\Bigr)[\overline{CDK}]\,,\\
       \frac{d[APC]}{dt}&=\theta_4\frac{[\overline{CDK}]^{n_3}}{[\overline{CDK}]^{n_3}+k_3^{n_3}}-\alpha_2[APC]\,,\\
        \frac{d[Wee1]}{dt}&=\theta_6\frac{k_1^{n_1}}{k_1^{n_1}+[\overline{CDK}]^{n_1}}-\alpha_4[Wee1]\,,
    \end{split}
    \label{2m}
\end{align}
where $[\overline{CDK}]=\frac{[CDK_I]}{[CDK_{total}]}$.  While it does not
explicitly depend on the new parameter $[CDK_{total}]$, reconstruction of
the values of $[CDK]$ and $[CDK_I]$ requires it.

In a further step, we apply scaling symmetries for reducing the number of
parameters (i.\,e., we rewrite the system using dimensionless quantities).
We used the \textsc{Maple} programme described in \citep{hubert2013scaling}
for an automated reduction.  As the programme cannot work with symbolic
exponents, we set all Hill coefficients $n_{k}$ equal to $2$ and afterwards
checked that the obtained scaling symmetries are correct also for other
values of the $n_{k}$.  In fact, this was rather obvious, as the reduction
does not affect the Hill functions because neither $[\overline{CDK}]$ nor
any parameter $k_{i}$ is rescaled.

For biological reasons, which we will address later, we also constrain our
reduction so that the parameters $b_1$ and $k_3$ remain unchanged in the
reduced system. Then, after the rescaling, there are only $10$ parameters
left, a reduction in the number of parameters by almost one-fourth.  The
new parameters are denoted by $c_i$, where $i=1,2,\ldots, 10$, and to
simplify the notation, we abbreviate from now the concentrations by
$x_1=[CycB]$, $x_2=[\overline{CDK}]$, $x_3=[APC]$, and $x_4=[Wee1]$ and
denote the rescaled concentrations by $y_{i}$.  The new parameters and
variables are defined as
$ c_1=b_1,\ c_2=\frac{b_2}{\theta_2^2},\ c_3=\frac{b_7\theta_4}{b_2},\
c_4=\frac{\theta_3\theta_6}{b_2},\ c_5=k_1,\ c_6= k_2,\ c_7=k_3,\
c_8=\frac{b_2\alpha_1}{\theta_2^2b_7},\ c_9=\frac{\alpha_2}{\theta_2},\
c_{10}=\frac{\alpha_4}{\theta_2}, \ \tau=\theta_2t,\ \ y_1=\theta_2x_1,\ \
y_2=x_2,\ y_3=\frac{b_7\theta_2}{b_2}x_3,\ \
y_4=\frac{\theta_2\theta_3}{b_2}x_4\,$.  We arrive now at the rescaled
system:
\begin{align}
\begin{split}
\frac{dy_1}{d\tau}&=c_1-c_2y_1y_3\,,\\
\frac{dy_2}{d\tau}&=c_2y_1+\frac{y_2^{n_2}}{y_2^{n_2}+c_6^{n_2}}\\
&-\left(c_2y_1+\frac{y_2^{n_2}}{y_2^{n_2}+c_6^{n_2}}+c_2y_4+c_8y_3\right)y_2\,,
\end{split}
&
\begin{split}
\frac{dy_3}{d\tau}&=c_3\frac{y_2^{n_3}}{y_2^{n_3}+c_7^{n_3}}-c_9y_3\,,\\
\frac{dy_4}{d\tau}&=c_4\frac{c_5^{n_1}}{c_{5}^{n_1}+y_2^{n_1}}-c_{10}y_4\,.
\end{split}
\label{rdu}
\end{align}

All reductions performed so far are of an exact nature.  In a last step, we
exploit that the evolution of the cell cycle of \emph{D. melanogaster}
embryogenesis involves both slow and fast time scales which allows for the
use of approximate reduction techniques like the quasi-steady state
approximation, see e.\,g.\ \citep{bertram2017multi} or
\citep{kuehn2015multiple}.  According to
\citep[Sect.~1.3.1]{alon2019introduction}, there is a time response
separation among transcription networks: input signals typically influence
transcription factor activities on a subsecond period of time, and it
frequently takes a few seconds for the active transcription factor's
binding to its DNA sites to reach equilibrium.  The translation and
transcription of the target gene take minutes, and the accumulation of the
protein product can take several minutes to many hours. This leads to a
large variation in the time response between the signal and the
accumulation of protein products.  On the slow timescale, transcription
networks can be thought of as being in a steady state, whereas signaling
transduction networks composed of interacting proteins function at a far
slower pace.  In our case the cyclin time response is faster than the other
ones, so that we may assume that $[CycB]$ is in a quasi-steady state.

To make the time scale separation better visible, we perform a further
rescaling of the variables $y_1, y_3$, and $y_4$ to
\begin{align*}
  z_1=\frac{c_2}{c_1}y_1,\quad
  z_3=\frac{c_9}{c_3}y_3,\quad
  z_4=\frac{c_{10}}{c_4}y_4\,.
\end{align*}
This transforms \eqref{rdu} to the system
\begin{align}
\begin{split}
\frac{dz_1}{d\tau}=&c_2(1-\beta_1z_1z_3)\,,\\
\frac{dz_2}{d\tau}=&c_1z_1+\frac{z_2^{n_2}}{z_2^{n_2}+c_6^{n_2}}\\
&-\left(c_1z_1+\frac{z_2^{n_2}}{z_2^{n_2}+c_6^{n_2}}+\beta_2z_4+\beta_3z_3\right)z_2\,,\\
\end{split}
&
\begin{split}
\frac{dz_3}{d\tau}=&c_9\left(\frac{z_2^{n_3}}{z_2^{n_3}+c_7^{n_3}}-z_3\right)\,,\\
\frac{dz_4}{d\tau}=&c_{10}\left(\frac{c_5^{n_1}}{c_{5}^{n_1}+z_2^{n_1}}-z_4\right)\,,
\end{split}
\label{420}
\end{align}
where
$\beta_1=\frac{c_3}{c_9}, \ \ \beta_2=\frac{c_2c_4}{c_{10}}, \ \
\beta_3=\frac{c_3c_8}{c_{9}}$.  In this formulation,
$\epsilon=\frac{1}{c_2}=\frac{\theta_2^2}{b_2}$ is the parameter leading to
the time scale separation, i.\,e.\ it should be small enough so that the
variable corresponding to cyclin ($z_1$) is in a quasi-steady state.  Now
the differential equation for cyclin is given by
$\epsilon\frac{dz_1}{d\tau}=(1-\beta_1z_1z_3)$ and in the limit
$\epsilon\to0$, corresponding to the quasi-steady state approximation, we
get $z_1=1/(\beta_1z_3)\,.$ This result agrees with the biological
assumption that $z_1$ is degraded by $z_3$.  Entering it into system
\eqref{420}, we obtain our final reduced model
\begin{align}
    \begin{split}
      \frac{dz_2}{d\tau}=&\frac{c_1}{\beta_1z_3}+\frac{z_2^{n_2}}{z_2^{n_2}+c_6^{n_2}}\\
      &-\left(\frac{c_1}{\beta_1z_3}+\frac{z_2^{n_2}}{z_2^{n_2}+c_6^{n_2}}+\beta_2z_4+\beta_3z_3\right)z_2\,,
    \end{split}
    &
    \begin{split}
\frac{dz_3}{d\tau}=&c_9\left(\frac{z_2^{n_3}}{z_2^{n_3}+c_7^{n_3}}-z_3\right)\,,\\
\frac{dz_4}{d\tau}=&c_{10}\left(\frac{c_5^{n_1}}{c_{5}^{n_1}+z_2^{n_1}}-z_4\right)\,,
    \end{split}
    \label{3}
\end{align}
a system of three differential equations depending on nine parameters.  In
fact, the two parameters $c_1$ and $\beta_1$ appear only as the fraction
$\frac{c_1}{\beta_1}$.  Hence, the final model could be described with only
$8$ parameters, if this fraction was treated as a single parameter.
However, we will later perform a bifurcation analysis with $c_1$ as
bifurcation parameter, as $c_{1}$ is the key parameter for our biological
hypothesis about the role of the cyclin synthesis, and therefore we leave
the combination $\frac{c_1}{\beta_1}$ as it is.

\section{Bifurcation Analysis of the Reduced Model}
\label{sec:bifana}

As our reduced model~\eqref{3} still depends on nine parameters, a complete
bifurcation analysis is not possible.  Instead, in view of our biological
question, we will concentrate on two parameters, namely on $c_{1}=b_{1}$
and $c_{7}=k_{3}$, i.\,e.\ on the cyclin synthesis and on the activation
coefficient for APC.  Below, we will perform for each of these two
parameters an individual bifurcation analysis.

\subsection{Biologically Relevant Steady States and Their
  Stability}\label{pss}

We begin the mathematical analysis of the final reduced system~\eqref{3} by
searching for biologically relevant steady states.

\begin{prop}
  For any parameter vector $\theta\in\RR^{9}$ and for any integer values of
  the Hill coefficients $n_{1},n_{2},n_{3}\in\NN$, the reduced system
  \eqref{3} possesses at least one positive steady state
  $\left(\hat{z}_2, \hat{z}_3, \hat{z}_4\right)\in\RR_+^3$.
\end{prop}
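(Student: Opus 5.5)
The plan is to reduce the steady-state problem for \eqref{3} to a single scalar equation in $z_2$ and then apply the intermediate value theorem. Throughout we read ``any parameter vector'' as any vector of \emph{positive} parameters, as is biologically meaningful.

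The last two equations of \eqref{3} are linear in $z_3$ and $z_4$. Setting them to zero expresses these variables explicitly in terms of $z_2$:
\begin{equation*}
  \hat z_3=h_3(z_2):=\frac{z_2^{n_3}}{z_2^{n_3}+c_7^{n_3}},\qquad
  \hat z_4=h_4(z_2):=\frac{c_5^{n_1}}{c_5^{n_1}+z_2^{n_1}} .
\end{equation*}
For $z_2>0$ both values are strictly positive, since $h_3(z_2)\in(0,1)$ and $h_4(z_2)\in(0,1)$. So every positive root $\hat z_2$ of
\begin{equation*}
  F(z_2):=\Bigl(\frac{c_1}{\beta_1h_3(z_2)}+H(z_2)\Bigr)(1-z_2)-\bigl(\beta_2h_4(z_2)+\beta_3h_3(z_2)\bigr)z_2,
  \qquad H(z_2)=\frac{z_2^{n_2}}{z_2^{n_2}+c_6^{n_2}},
\end{equation*}
yields a positive steady state $(\hat z_2,h_3(\hat z_2),h_4(\hat z_2))\in\RR_+^3$.

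The function $F$ is continuous on $(0,\infty)$ for integer (indeed any positive) Hill coefficients, so it suffices to exhibit a sign change on the interval $(0,1]$.

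1. At $z_2=1$ the first bracket is multiplied by $0$, hence
\[
F(1)=-\bigl(\beta_2h_4(1)+\beta_3h_3(1)\bigr)<0 .
\]
2. As $z_2\to0^+$ we have $h_3(z_2)\to0^+$, so $\frac{c_1}{\beta_1h_3(z_2)}\to+\infty$. Meanwhile $H$, $h_3$ and $h_4$ stay bounded and the factor $1-z_2\to1$, so $F(z_2)\to+\infty$. In particular $F(\varepsilon)>0$ for small $\varepsilon>0$.

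By the intermediate value theorem there is $\hat z_2\in(\varepsilon,1)$ with $F(\hat z_2)=0$. This also gives $\hat z_2<1$, which is consistent with $z_2$ being a fraction of the total CDK.

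The only delicate point is the singularity at $z_3=0$ that was introduced by the quasi-steady-state reduction. We will verify that it does not spoil the argument: it only occurs as $z_2\to0$, so it lies outside the open positive orthant where the steady state is sought. In fact it helps, because it forces $F\to+\infty$ near $0$.

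One should also note that multiplying the $z_2$-equation by $h_3(z_2)>0$ does not change its zero set. Using this, the root can alternatively be phrased as a root of the polynomial obtained after clearing denominators, which is where integrality of the $n_i$ would be used. Positivity of $c_1,\beta_1,\beta_2,\beta_3,c_5,c_6,c_7$ is exactly what makes the signs in steps 1 and 2 work.
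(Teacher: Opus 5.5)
Your argument is correct. It shares the paper's first step: solve the two linear equations for $\hat{z}_3,\hat{z}_4$, then look for a sign change of the remaining scalar equation in $z_2$. The endpoints differ, though, and that makes your version strictly stronger.

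The paper clears denominators to obtain the polynomial \eqref{e1}. It then compares the positive constant term with the negative leading coefficient, i.e.\ the values at $\hat{z}_2=0$ and as $\hat{z}_2\to\infty$. That only locates a root somewhere in $(0,\infty)$.

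You keep the rational form and use two different endpoints. As $z_2\to0^+$, the term $c_1/(\beta_1h_3)$ blows up; this is the counterpart of the paper's positive constant term. At $z_2=1$, the factor $1-z_2$ eliminates all production terms. This buys you three things:
\begin{itemize}
  \item you avoid the lengthy expansion \eqref{e1};
  \item the argument works for arbitrary positive, not only integer, Hill coefficients;
  \item the root lands in $(0,1)$ for every positive parameter vector.
\end{itemize}
In the paper's notation, your endpoint at $z_2=1$ amounts to the identity
\[
  P(1)=-\beta_1\bigl(1+c_6^{n_2}\bigr)\bigl(\beta_3(1+c_5^{n_1})+\beta_2c_5^{n_1}(1+c_7^{n_3})\bigr)<0 ,
\]
where $P$ is the left-hand side of \eqref{e1}.

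Consequently, the paper's subsequent remark is unnecessary. It says a root in $[0,1]$ has to be established only on some parameter region. In fact a biologically meaningful steady state always exists, and only its uniqueness is delicate. Uniqueness is where the explicit polynomial form (Sturm sequence, companion-matrix roots) is genuinely needed.

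You were also right to state positivity of the parameters explicitly. The paper's sign argument relies on it without saying so.
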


\begin{proof}
  The steady states are given by the common zeros of the right-hand sides
  of~ \eqref{3} which we denote by $(f_{1},f_{2},f_{3})$.  Since $f_{2}$
  and $f_{3}$ are linear in $\hat{z}_3$ and $\hat{z}_4$, respectively,
  we can express these two coordinates by $\hat{z}_2$:
  \begin{equation}
    \hat{z}_3=\frac{\hat{z}_2^{n_3}}{\hat{z}_2^{n_3}+c_7^{n_3}}\,,\qquad
    \hat{z}_4=\frac{c_5^{n_1}}{c_{5}^{n_1}+\hat{z}_2^{n_1}}\,.
    \label{e416}
  \end{equation}
  Obviously, these expressions yield positive values whenever $\hat{z}_2$
  is positive.  Entering them into the first equation yields a rational
  equation for $\hat{z}_2$ alone which after multiplication by the common
  denominator leads to the following polynomial equation for $\hat{z}_2$
  where the degrees of the different terms depend on the Hill coefficients:
  \begin{multline}
\left(- \beta_1\beta_3- \beta_1- c_1\right)\hat{z}_2^{n_1 + n_2 + 2n_3 + 1}+\left(\beta_1
+ c_1\right)\hat{z}_2^{n_1 + n_2 + 2n_3}\\
{}+\left(- \beta_1\beta_2c_5^{n_1}
- \beta_1\beta_3c_5^{n_1}
- \beta_1c_5^{n_1} 
- c_1c_5^{n_1}\right)\hat{z}_2^{n_2 + 2n_3 + 1}\\
{}+\left(- \beta_1\beta_3c_6^{n_2}
- c_1c_6^{n_2}\right)\hat{z}_2^{n_1 + 2n_3 + 1}+\left(- \beta_1c_7^{n_3} 
- 2c_1c_7^{n_3}\right)\hat{z}_2^{n_1 + n_2 + n_3 + 1}\\
{}+\left(\beta_1c_7^{n_3}
+ 2c_1c_7^{n_3}\right)\hat{z}_2^{n_1 + n_2 + n_3}+\left(\beta_1c_5^{n_1}
+ c_1c_5^{n_1}\right)\hat{z}_2^{n_2 + 2n_3}+ c_1c_6^{n_2}\hat{z}_2^{n_1 + 2n_3}\\
{}+\left(- \beta_1\beta_2c_5^{n_1}c_7^{n_3}- \beta_1c_5^{n_1}c_7^{n_3}- 2c_1c_5^{n_1}c_7^{n_3}\right)\hat{z}_2^{n_2 + n_3 + 1}- 2c_1c_6^{n_2}c_7^{n_3}\hat{z}_2^{n_1 + n_3 + 1}\\
{}+\left(- \beta_1\beta_2c_5^{n_1}c_6^{n_2}- \beta_1\beta_3c_5^{n_1}c_6^{n_2}- c_1c_5^{n_1}c_6^{n_2}\right)\hat{z}_2^{2n_3 + 1}- c_1c_7^{2n_3}\hat{z}_2^{n_1 + n_2 + 1}+ 2c_1c_6^{n_2}c_7^{n_3}\hat{z}_2^{n_1 + n_3}\\
{}+ c_1c_7^{2n_3}\hat{z}_2^{n_1 + n_2}+\left( \beta_1c_5^{n_1}c_7^{n_3} + 2c_1c_5^{n_1}c_7^{n_3}\right)\hat{z}_2^{n_2 + n_3}+ c_1c_5^{n_1}c_6^{n_2}\hat{z}_2^{2n_3}- c_1c_6^{n_2}c_7^{2n_3}\hat{z}_2^{n_1 + 1}\\
{}- c_1c_5^{n_1}c_7^{2n_3}\hat{z}_2^{n_2 + 1}+\left(-\beta_1\beta_2c_5^{n_1}c_6^{n_2}c_7^{n_3}
- 2c_1c_5^{n_1}c_6^{n_2}c_7^{n_3}\right)\hat{z}_2^{n_3 + 1} + 2c_1c_5^{n_1}c_6^{n_2}c_7^{n_3}\hat{z}_2^{n_3}+ c_1c_5^{n_1}c_7^{2n_3}\hat{z}_2^{n_2}\\
{}+ c_1c_6^{n_2}c_7^{2n_3}\hat{z}_2^{n_1}- c_1c_5^{n_1}c_6^{n_2}c_7^{2n_3}\hat{z}_2 + c_1c_5^{n_1}c_6^{n_2}c_7^{2n_3}=0\,.
    \label{e1}
  \end{multline}

  Our claim is true, if and only if this polynomial has at least one real
  positive root.  Independent of the chosen Hill coefficients, the term of
  highest degree is $\hat{z}_2^{n_1+n_2+2n_3+1}$ and it appears with a
  negative coefficient so that the polynomial becomes negative for
  sufficiently large $\hat{z}_{2}$.  As the constant term
  $c_1c_5^{n_1}c_6^{n_2}c_7^{2n_3}$ -- and thus the value
  of the polynomial for $\hat{z}_{2}=0$ -- is positive, there must be at
  least one real positive root.
\end{proof}
  
A more delicate question is whether this steady state is unique.
Furthermore, since the variable $z_{2}$ corresponds to a percentage, its
biologically meaningful values are confined to the interval $[0,1]$ and
thus we have to prove that the polynomial in \eqref{e1} has a root in this
interval.  We will now prove the existence of a parameter region
$\Theta\subseteq\RR^{9}$ where this is the case.  We cannot rigorously
describe this region.  Instead, we will show that for a certain parameter
vector $\hat{\theta}$ and for certain Hill coefficients, the polynomial in
\eqref{e1} possess only one real root in the interval $(0,1)$.  Considering
also the Hill coefficients as continuous quantities, this implies the
existence of an open domain in $\RR^{12}$ where the positive steady state
is unique and biologically meaningful.

The literature does not contain results on the direct measurement of the
parameters in our model, but one can find some numerical estimates stemming
from experimental data.  We considered values published in
\citep{deneke2016waves}, \citep{shindo2021modeling} and
\citep{pomerening2005systems} where the effects of different proteins --
including CDC25, Wee1, and APC -- on the CDK dynamics were studied.  We did
not use the exact values from these references, but used them as starting
points for adapting them to our biological context.  The finally used
values were also chosen with the numerical simulations in mind which we
will discuss later so that our values lead to results similar to the
experimentally observed cell cycle regulation for early embryogenesis of
\emph{D. melanogaster}.  In the cited literature, all Hill
coefficients $n_{k}$ were set to~$5$.  For this value the polynomial in
\eqref{e1} has degree $21$ and consequently all subsequent analysis becomes
very involved.  We therefore used instead $2$ as common value for all Hill
coefficients $n_{k}$ leading to a degree $9$ polynomial.  Our complete
parameter set can be found in Table~\ref{table 1}.

\begin{table}[t]
  \centering
  \caption{Parameter values used for analysis and simulations.}
  \label{table 1}
  \begin{tabular}{|c|c|c|}\hline
    $c_1=0.002$&$c_5=0.02$&$c_6=0.25$ \\
    $c_7=0.25$ & $c_9=0.42857$&$c_{10}=0.135857$\\
    $\beta_1=1.75$&$\beta_2=0.150217$ & $\beta_3=7.142857$\\
    $n_1=2$&$n_2=2$&$n_3=2$\\\hline
  \end{tabular}
\end{table}

We claim now that for the parameter values given in Table~\ref{table 1},
there exists only one real steady state and it is biologically relevant.
Entering the chosen parameter values into \eqref{e1}, one obtains the
following polynomial equation of degree nine (for the printing the
coefficients have been truncated to six digits):
\begin{multline}\label{ne1}
  -14.252\hat{z}_2^9 + 1.752\hat{z}_2^8 - 0.896805\hat{z}_2^7\\
  {}+ 0.110450\hat{z}_2^6 - 0.000392\hat{z}_2^5 +
  6.73375\cdot10^{-5}\hat{z}_2^4 - 9.08408\cdot10^{-7}\hat{z}_2^3\\
  {}+ 4.97656\cdot10^{-7}\hat{z}_2^2 - 1.95312\cdot10^{-10}\hat{z}_2 +
  1.95312\cdot10^{-10}\,.
\end{multline}

With standard numerical techniques (we used the \texttt{roots} method from
\textsc{NumPy} which estimates the roots as the eigenvalues of the
companion matrix), one can compute its nine roots which are all isolated
and simple.  Only one root is real and it also lies in the interval
$[0,1]$.  It leads to the unique steady state
\begin{equation}\label{uss}
  \hat{z}_2=0.12555\,,\quad
  \hat{z}_{3}=0.20140\,,\quad 
  \hat{z}_{4}=0.024748\,.
\end{equation}
Thus we can conclude that for the chosen parameter values, our model has
only one real steady state which is also biologically relevant.

Given that the last two coefficients of the polynomial \eqref{ne1} are very
small (indicating the presence of two roots close to zero), one may wonder
whether numerical errors may have lead here to wrong results.  Since
\textsc{Python} always computes in double precision, the two coefficients
are well distinguishable from zero.  We checked the residuals of all nine
roots and they are all smaller than $10^{-20}$.  In addition, we also
computed a Sturm sequence \citep[Chapt.~2]{jmm:roots} for the polynomial
\eqref{ne1} which also confirmed that there is exactly one real root in the
interval $(0,1]$.  The computation of such a sequence is sometimes
ill-conditioned, but as two different approaches lead to the same
conclusion about the existence of exactly one biologically relevant steady
state, we are very confident that it is correct.

We computed the roots of the polynomial \eqref{e1} also for three other
sets of parameter values.  Compared to Table~\ref{table 1}, we always only
changed either the value of $c_{1}$ or the value of $c_{7}$ (the choice of
the values will become apparent below).  In each case, the polynomial
\eqref{e1} has only one real root and it always lies in the interval
$(0,1)$.  Table~\ref{table 2} contains the corresponding locations of the
steady state.

\begin{table}[t]
  \centering
  \caption{Steady states for different parameter values}
  \label{table 2}
  \begin{tabular}{|cc|ccc|}
    \hline
    $c_{1}$ & $c_{7}$ & $\hat{z}_{2}$\strut & $\hat{z}_{3}$& $\hat{z}_{4}$ \\
    \hline
    $0.002$ & $0.250$ & $0.12555$ & $0.20140$ & $0.024748$ \\
    $0.040$ & $0.250$ & $0.15401$ & $0.27511$ & $0.01658$\\
    $0.002$ & $0.040$ & $0.014352$ & $0.11405$ & $0.6601$ \\
    $0.002$ & $0.500$ & $0.25766$ & $0.20983$ & $0.005989$ \\
    \hline
  \end{tabular}
\end{table}

\begin{rem}
  Using advanced algorithms from real algebraic geometry
  \citep{bpr:algorag} like cylindrical algebraic decompositions, one can in
  principle determine for fixed Hill coefficients a semi-algebraic
  description of the parameter region $\Theta\subseteq\RR^{9}$ in which the
  polynomial equation~\eqref{e1} possesses exactly one real root in the
  interval $(0,1)$.  However, given the rather large number of parameters
  and the potentially high degree of the polynomial, this appears
  unfeasible in practice.
\end{rem}

We now look at the stability of a steady state
$\left(\hat{z}_2, \hat{z}_3, \hat{z}_4\right)$ using the Jacobian of the
reduced system \eqref{3}.  In a tedious, but straightforward computation,
one obtains the characteristic polynomial
$\lambda^3+\gamma_2\lambda^2+\gamma_1\lambda+\gamma_0=0$ where the
coefficients are given by
\begin{displaymath}
  \gamma_2=c_9+c_{10}-\delta_1\,,\quad
  \gamma_1=(c_9-\delta_1)c_{10}-\delta_2\delta_3-\delta_1c_9\,,\quad
  \gamma_0=-(\delta_1c_9-\delta_2\delta_3)c_{10}\,,
\end{displaymath}
with the shorthands
\begin{gather*}
  \delta_1= -\frac{c_1}{\beta_1\hat{z}_3} +
  \frac{k_2^{n_2}n_2\hat{z}_2^{n_2-1}(1-\hat{z}_2)}{\left(\hat{z}_2^{n_2}+k_2^{n_2}\right)^2}
  - \frac{\hat{z}_2^{n_2}}{\hat{z}_2^{n_2}+k_2^{n_2}} - \beta_2\hat{z}_4 - \beta_3\hat{z}_3\\
  \delta_2 =
  -\frac{c_1}{\beta_1}\frac{1-\hat{z}_2}{\hat{z}_3^{2}}-\beta_3\hat{z}_2\,,\qquad
  \delta_3=\frac{c_9n_3k_3^{n_3}\hat{z}_2^{n_3-1}}{\left(\hat{z}_2^{n_3}+k_3^{n_3}\right)^2}\,.
\end{gather*}

For the same parameter values as in Table~\ref{table 2}, we obtained at the
corresponding unique steady states the eigenvalues shown in
Table~\ref{table 3}.  We always found one real and two conjugate complex
eigenvalues.  While the real eigenvalue is always negative, the real part
of the complex eigenvalues is positive for our basic parameter set given in
Table~\ref{table 1} and negative for all other studied parameter sets.
Hence, in the first case the steady state is unstable and one can expect an
expanding oscillation nearby, whereas in the second case the steady state
is stable with a damped oscillation nearby.  This observation indicates the
appearence of bifurcations, if we change either $c_{1}$ or $c_{7}$ around
our basic parameter set.  We will now show that indeed in both cases Hopf
bifurcations exist leading to stable limit cycles.

\begin{table}[t]
  \centering
  \caption{Eigenvalues of the Jacobian at the steady state for different parameter values}
  \label{table 3}
  \begin{tabular}{|cc|cc|}
    \hline
    $c_{1}$ & $c_{7}$ & $\lambda_{1}$ & $\lambda_{2/3}$ \\
    \hline
    $0.002$ & $0.250$ & $-0.135439$  & $0.081083 \pm 0.866471i$ \\
    $0.040$ & $0.250$ & $-0.135671$ & $-0.282007 \pm 1.217285i$ \\
    $0.002$ & $0.040$ & $-0.134$  &  $ -0.45392\pm1.063i$ \\
    $0.002$ & $0.500$ & $-0.135$  & $-0.5054\pm1.0097i$  \\
    \hline
  \end{tabular}
\end{table}

\subsection{Bifurcation Analysis for the Parameter $c_1$}

There exists a simple geometric explanation of the difference in the
behavior for the two values of $c_{1}$ considered above.  Our reduced
model \eqref{3} is three-dimensional so that nullclines define surfaces in
$\RR^{3}$ making their geometry more difficult to analyze.  We therefore
follow an idea from \citep{msv:pnc} based on \citep{afs:mbh} using
``pseudo-nullclines'' which are plane curves: we decompose \eqref{3} into
two interconnected modules -- one consisting of $z_{2}$, $z_{4}$, the other
one of $z_{3}$:
\begin{equation}
  \begin{aligned}
    \frac{dz_2}{d\tau}&=f_{2}(z_{2},z_{4};z_{3})\,,\\
    \frac{dz_4}{d\tau}&=f_{4}(z_{2},z_{4})\,,
  \end{aligned}
  \qquad\qquad \frac{dz_3}{d\tau}=f_{3}(z_{3};z_{2})\,.
\end{equation}
Here, one should consider $z_{3}$ as a constant parameter in the left
column and $z_{2}$ as a constant parameter in the right column.  In the
decomposition, we could have swapped the role of $z_{3}$ and $z_{4}$.
However, in the above form we will obtain a $z_{2}$ pseudo-nullcline
relating $z_{2}$ and $z_{3}$, i.\,e.\ the biologically most relevant
quantities: CDK $(z_2)$ and APC $(z_3)$ which is equivalent to the input
signal CycB ($z_1=1/(\beta_1z_3)$).

Searching for a steady state of the left module leads to equations
\begin{displaymath}
  z_{4}=h(z_{2})\,,\qquad f_{2}\bigl(z_{2},h(z_{2});z_{3}\bigr)=0\,,
\end{displaymath}
where $h$ is given by the right equation in \eqref{e416}.  In the
$z_{2}$-$z_{3}$-plane, the $z_{2}$-pseudo-nullcline is now the curve
implicitly described by the second equation, i.\,e.\ by
\begin{equation}\label{nc2}
  \frac{c_1}{\beta_1z_3} + \frac{z_2^{n_2}}{z_2^{n_2}+c_6^{n_2}} -
  \left(\frac{c_1}{\beta_1z_3} + \frac{z_2^{n_2}}{z_2^{n_2}+c_6^{n_2}} + 
    \beta_2\frac{c_5^{n_1}}{c_{5}^{n_1} + z_2^{n_1}} + \beta_3z_3\right)z_2
  = 0\,. 
\end{equation}
Clearing the common denominator shows that it is a plane algebraic curve of
degree seven.  Because of the simple structure of our system, the
$z_{3}$-pseudo-nullcline is defined by the vanishing of $f_{3}$, i.\,e.\ by
the left equation in \eqref{e416}.

Above the $z_{2}$-pseudo-nullcline, CDK becomes inactive ($\dot{z}_{2}<0$)
and below it CDK is activated ($\dot{z}_{2}>0$).  Analogously, to the left
of the $z_{3}$-pseudo-nullcline, APC is degraded ($\dot{z}_{3}<0$) and to
the right of it synthesized ($\dot{z}_{3}>0$).  The
$z_{3}$-pseudo-nullcline is a monotonically increasing curve approaching a
saturation value for large $z_{3}$.  It is affected only by changes in the
parameter $c_{7}$ and even then only changes its steepness, but not its
basic shape.  By contrast, the $z_{2}$-pseudo-nullcline depends on six
parameters -- including $c_{1}$ -- and its shape is strongly affected by
changes in the parameter values.

\begin{figure}[t]
  \centering
  \begin{subfigure}{0.45\textwidth}
    \centering
    \includegraphics[width=\textwidth]{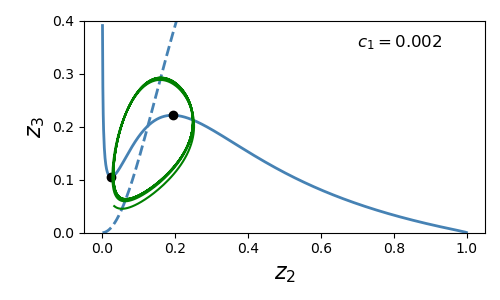}
  \end{subfigure}\qquad
  \begin{subfigure}{0.45\textwidth}
    \centering
    \includegraphics[width=\textwidth]{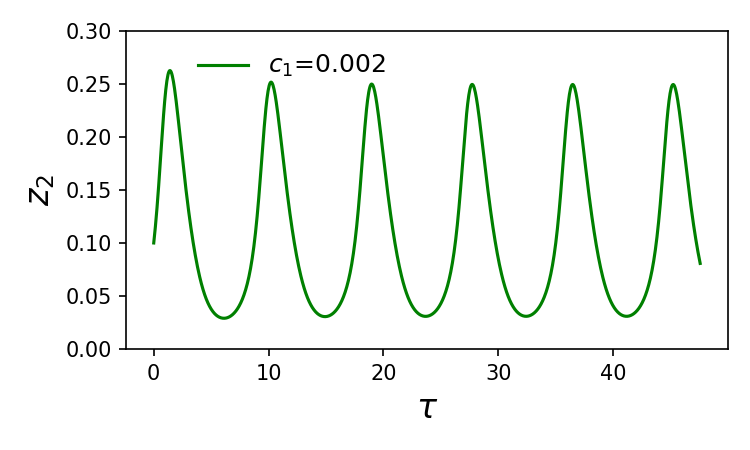}
  \end{subfigure}\\
  \begin{subfigure}{0.45\textwidth}
    \centering
    \includegraphics[width=\textwidth]{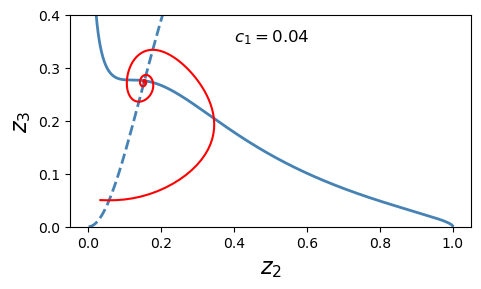}
  \end{subfigure}\qquad
  \begin{subfigure}{0.45\textwidth}
    \centering
    \includegraphics[width=\textwidth]{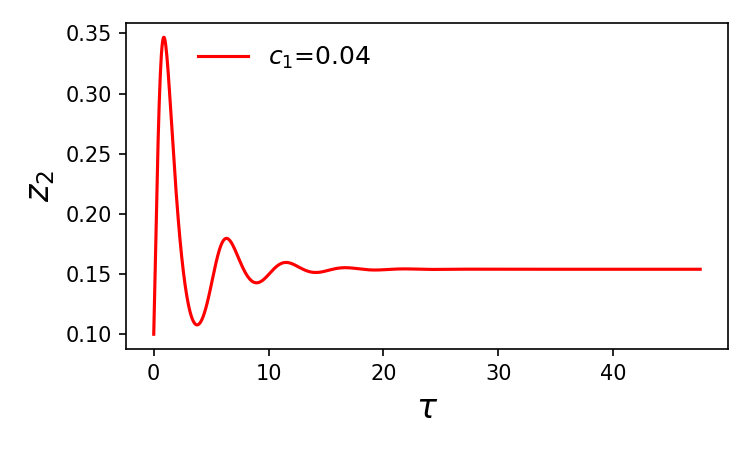}
  \end{subfigure}
  \caption{Phase portraits projected to $z_{2}$-$z_{3}$ plane and CDK
    evolution for two different values of $c_{1}$: top row $c_{1}=0.002$,
    bottom row $c_{1}=0.04$.  The left column shows a typical trajectory
    together with the pseudo-nullclines (solid -- $z_{2}$, dashed --
    $z_{3}$).  In the right column a time plot of $z_{2}$ is depicted
    showing that in the first case a sustained and in the second case a
    strongly damped oscillation occurs.}
  \label{pb1}
\end{figure}

In the left column of Figure~\ref{pb1}, one can see the pseudo-nullclines
for the two considered values of $c_{1}$ with their intersection defining
the steady state.  While the dashed $z_{3}$-pseudo-nullcline is the same in
both plots, the shape of the $z_{2}$-pseudo-nullcline changes drastically.
For the smaller value, it has an S-shape with a local minimum at
$(0.025, 0.10518)$ and a local maximum at $(0.193, 0.2214)$, whereas for
the larger value it is a monotonically decreasing curve.

Such an S-shaped curve is reminiscent of relaxation oscillators like the
FitzHugh--Nagumo model and indeed the phase portrait indicates the
existence of a limit cycle.  However, there are some differences probably
due to the fact that here the pseudo-nullcline is described by a septic and
not by cubic polynomial.  In the considered parameter domain, we always
have only one steady state and its stability is not simply determined from
whether it lies on an increasing or decreasing branch of the
$z_{3}$-pseudo-nullcline.  One can identify three critical values of the
cyclin synthesis $c_{1}$: above $c_{1}^{(s)}$ the S-shape of the
$z_{2}$-pseudo-nullcline disappears and it becomes a monotonically
decreasing curve; above $c_{1}^{(m)}$ the steady state lies to the right of
the maximum of the $z_{2}$-pseudo-nullcline and at $c_{1}^{(h)}$ a Hopf
bifurcation occurs and the stability of the steady state changes from
unstable to stable for higher values.

Numerically, all three critical values are readily determined.  The septic
polynomial describing the $z_{2}$-pseudo-nullcline is quadratic in $z_{3}$
and hence can easily be solved in the form $z_{3}=g(z_{2};c_{1})$.  The two
roots have different signs and biologically only the positive sign is
relevant; hence $g$ denotes from now on the positive solution.  The
critical value $c_{1}^{(s)}$ can be determined by solving the bivariate
system
\begin{equation}
  \frac{\partial g}{\partial z_{2}}(z_{2};c_{1})=
  \frac{\partial^{2} g}{\partial z_{2}^{2}}(z_{2};c_{1})=0
\end{equation}
leading to the value $c_{1}^{(s)}=0.039337$ (the $z_{2}$-value is not
interesting).  For determining the critical value $c_{1}^{(m)}$, we must
solve the bivariate system
\begin{equation}
  \frac{\partial g}{\partial z_{2}}(z_{2};c_{1})=
  g(z_{2};c_{1})-h(z_{2})=0
\end{equation}
where again $h$ is given by the right equation in \eqref{e416}.  One then
finds $c_{1}^{(m)}=0.031080$.  The Hopf point is as usually obtained by
monitoring the complex eigenvalues of the Jacobian of \eqref{3} at the
steady state and is given by $c_{1}^{(h)}=0.0079967$.  One also verifies
easily that one deals with a supercritical Hopf bifurcation.

\begin{figure}[t]
  \centering
  \includegraphics[width=0.5\textwidth]{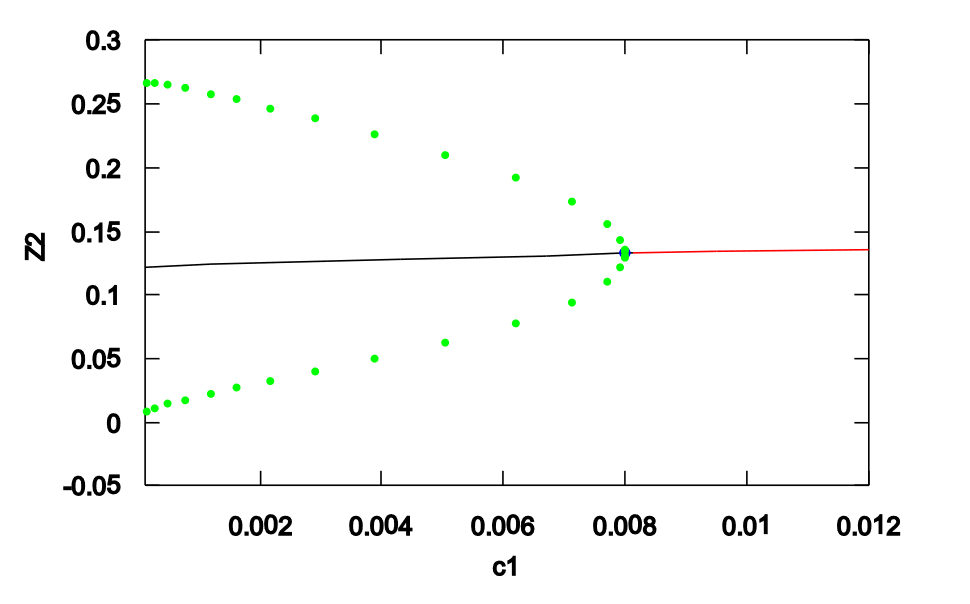}
  \caption{Bifurcation diagram for $c_1$.  Green dots indicate the size of
    the limit cycle; the black line shows unstable steady states, the red
    line stable ones.}
  \label{fig:bifc1}
\end{figure}

A bifurcation diagram in shown in Figure~\ref{fig:bifc1}.  For lower
$c_{1}$-values, a stable limit cycle (its size is indicated by green dots)
around an unstable steady state (shown in black) exists.  For higher
$c_{1}$-values, we have a stable steady state (shown in red).  Note that
the situation changes completely for the border case $c_{1}=0$, as now
$z_{2}=0$ becomes a multiple real root of the polynomial \eqref{e1}.  As
this case is biologically irrelevant, we ignore it.  The appearance of a
Hopf bifurcation is not very surprising, as the eigenvalues shown above for
two different $c_{1}$-values clearly indicate that between them a conjugate
pair of complex eigenvalues crosses the imaginary axis.  The left part of
Figure~\ref{cpee} shows for the lower value of $c_{1}$ a trajectory
approaching the limit cycle.  One can see that the approach concerns mainly
the $z_{4}$-coordinate, i.\,e.\ the Wee1 concentration.  Indeed, in the
projection to the $z_{2}$-$z_{3}$-plane shown in the upper left part of
Figure~\ref{pb1}, it looks like the trajectory would almost immediately
reach the limit cycle, but in the 3D plot one can see that this is not
really the case.  Of course, this is mainly due to our choice of the
initial value for the trajectory.

In the oscillatory regime, the phase portrait shown in the upper left part
of Figure~\ref{pb1} agrees well with the biological understanding of the
cell cycle.  Recall that by our quasi-steady state approximation the cyclin
concentration is the reciprocal of the APC concentration
($z_{1}=1/\beta_{1}z_{3}$).  If the system is in a state above both
pseudo-nullclines, then changes in the APC (or cyclin) concentration have
only a relative small effect on the CDK concentration.  If the APC
concentration crosses the $z_{2}$-pseudo-nullcline (which happens at a low
value and thus at a high cyclin level), then CDK activation starts and its
concentration increases rapidly.  But then the APC concentration is also
increasing and if it crosses again the $z_{2}$-pseudo-nullcline, then CDK
becomes inactivated and after some time the cycle starts again.  The
S-phase roughly corresponds to the part of the limit cycle around the left
crossing of the pseudo-nullcline (i.\,e.\ the troughs in the time
evolution) and the M-phase to the part around the right crossing (i.\,e.\
the crests in the time evolution).
  
The right part of Figure~\ref{cpee} shows the period length of the limit
cycle as a function of the cyclin synthesis $c_{1}$ in a log-log plot.  One
can see that it can be very well described by a power law of the form
$T=bc_{1}^{\alpha}$ with $b=0.43598$ and $\alpha=-0.186004$, i.\,e.\
roughly with the inverse of a fifth root.  Thus for low cyclin synthesis
the period length strongly growth.  This agrees well with our biological
hypothesis that a decrease in the cyclin synthesis is responsible for the
increase in the cycle length in embryogensis.

\begin{figure}[t]
  \centering
  \captionsetup{skip=65pt}
  \begin{subfigure}[c][0pt][c]{0.47\textwidth}
    \includegraphics[width=\linewidth]{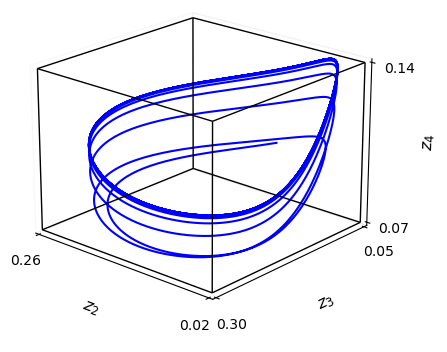}
  \end{subfigure}\quad
  \begin{subfigure}[c][0pt][c]{0.47\textwidth}
    \includegraphics[width=\textwidth]{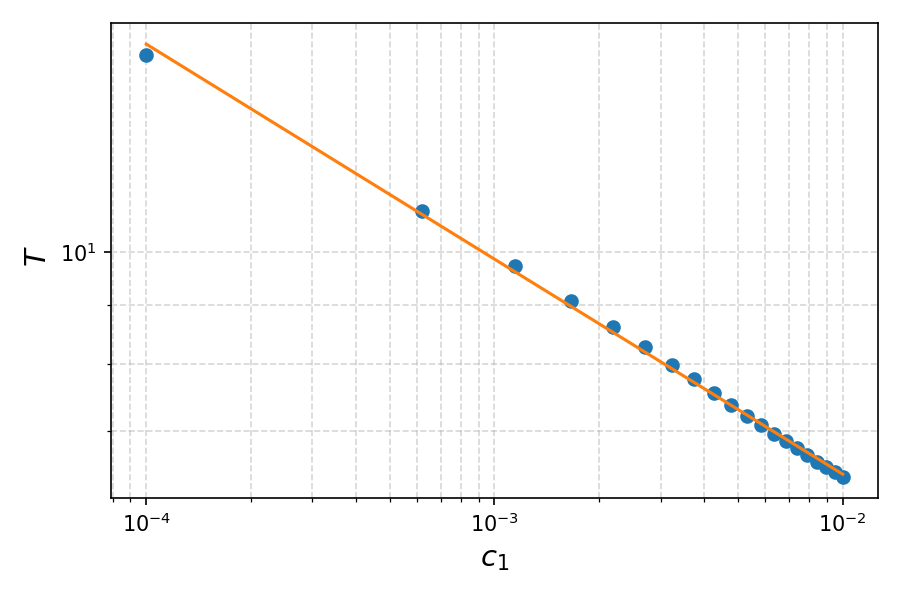}
  \end{subfigure}
  \caption{Left: a trajectory approaching the limit cycle in 3D for the
    parameter values of Table~\ref{table 1}.  Right: the period length $T$
    of the limit cycle in dependence of the cyclin synthesis $c_1$. }
  \label{cpee}
\end{figure}

\subsection{Bifurcation Analysis for the Parameter $c_7$}

From a biological perspective, cyclin begins to degrade when APC ($z_3$) is
activated, so the activation coefficient $c_7$ for APC is another parameter
of interest for us.  We continue to use the parameter values from
Table~\ref{table 1}, but now consider changes of the value of $c_{7}$.
Figure~\ref{in} shows phase portraits and the CDK evolution for the three
different values of $c_{7}$ for which we computed above the eigenvalues.
As already discussed, the $z_{2}$-pseudo-nullcline is independent of
$c_{7}$ and hence is the same in all three phase portraits.  The shape of
the $z_{3}$-pseudo-nullcline does not strongly change with $c_{7}$; it
remains a monotonically increasing curve, but the increase gets smaller for
larger values of $c_{7}$.

\begin{figure}[t]
  \centering
  \begin{subfigure}{0.4\textwidth}
    \includegraphics[width=\textwidth]{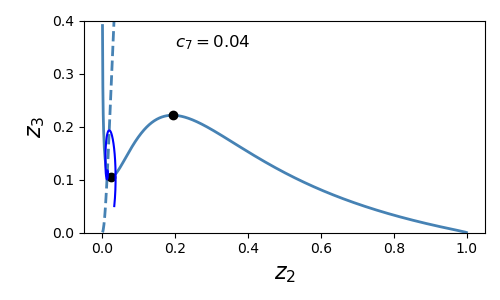}
  \end{subfigure}\qquad
  \begin{subfigure}{0.4\textwidth}
    \includegraphics[width=\textwidth]{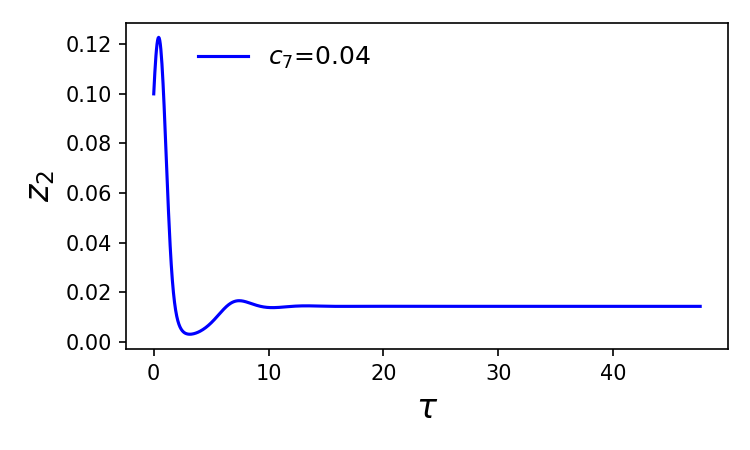}
  \end{subfigure}\\
  \begin{subfigure}{0.4\textwidth}
    \includegraphics[width=\textwidth]{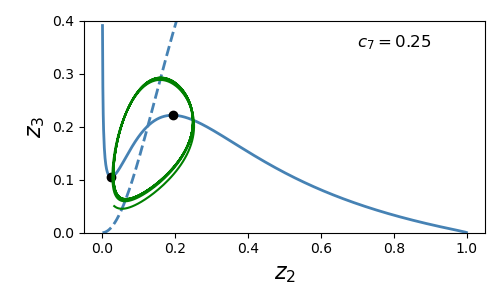}
  \end{subfigure}\qquad
  \begin{subfigure}{0.4\textwidth}
    \includegraphics[width=\textwidth]{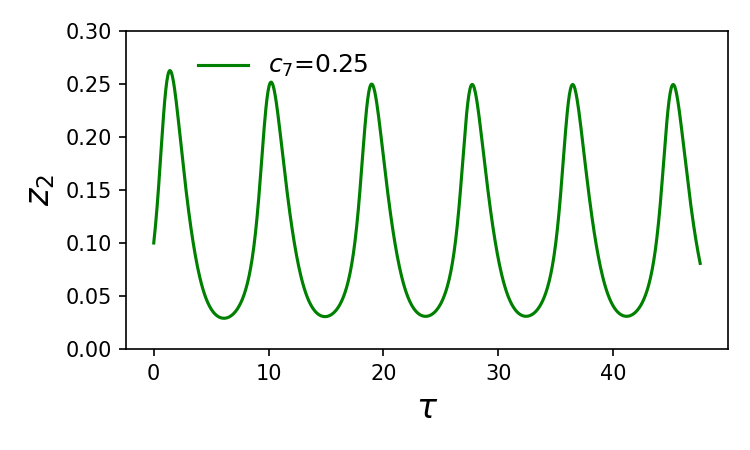}
  \end{subfigure}\\
  \begin{subfigure}{0.4\textwidth}
    \includegraphics[width=\textwidth]{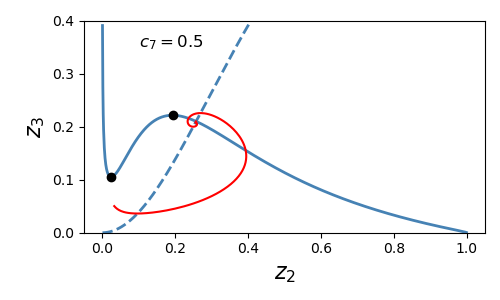}
  \end{subfigure}\qquad
  \begin{subfigure}{0.4\textwidth}
    \includegraphics[width=\textwidth]{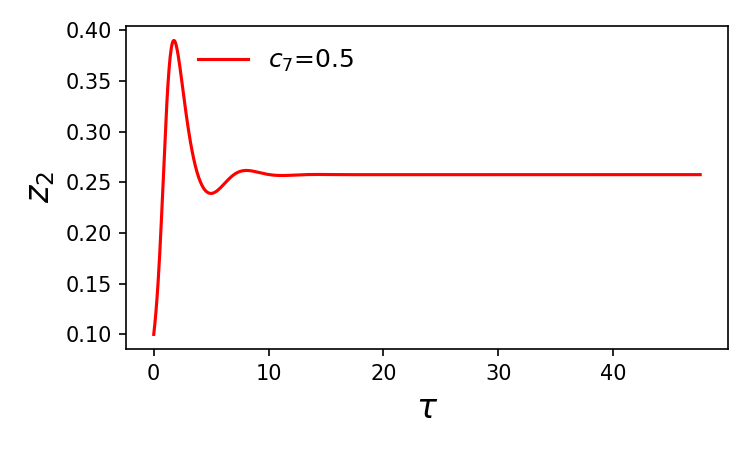}
  \end{subfigure}
  \caption{Phase portraits projected to $z_{2}$-$z_{3}$ plane and CDK
    evolution for three different values of $c_{7}$: top row $c_{7}=0.04$,
    middle row $c_{7}=0.25$, bottom row $c_{1}=0.5$.  The left column shows
    a typical trajectory together with the pseudo-nullclines (solid --
    $z_{2}$, dashed -- $z_{3}$).  In the right column a time plot of
    $z_{2}$ is depicted showing that in the middle row a sustained and in
    the two other rows a strongly damped oscillation occurs.}
  \label{in}
\end{figure}

The middle row corresponds to the values of Table~\ref{table 1} and shows
again the limit cycle we discussed already in the previous section.  But
one can see that for both smaller and larger values of $c_{7}$, the limit
cycle disappears and is replaced by damped oscillations.  Together with the
above presented eigenvalues, this observation indicates that there is a
``Hopf bubble'', i.\,e.\ two subsequent Hopf bifurcations -- one creating a
limit cycle and one destroying it.  The bifurcation diagram in
Figure~\ref{fig:bifc7} confirms this expectation.

Again one can determine several critical values of $c_{7}$.  At
$c_{7}^{(m)}$ the steady state lies at the minimum of the
$z_{2}$-pseudo-nullcline and at $c_{7}^{(M)}$ at the maximum, so that only
for values $c_{7}^{(m)}<c_{7}<c_{7}^{(M)}$ the steady state lies in the
ascending part of the $z_{2}$-pseudo-nullcline.  These two values are
easily computed.  As the $z_{2}$-pseudo-nullcline does not depend on
$c_{7}$, it is given as the graph of $z_{3}=g(z_{2})$ with the same
notation as in the previous section and we can directly determine the
locations $z_{2}^{(m/M)}$ of the two extrema as the zeros of $g'$.  Then we
solve the equations $g(z_{2}^{(m/M)})-h(z_{2}^{(m/M)};c_{7})=0$ to obtain
$c_{7}^{(m)}=0.072379$ and $c_{7}^{(M)}=0.36160$.  The two Hopf points
$c_{7}^{(h)}$, $c_{7}^{(H)}$ are again computed by monitoring the real
parts of the complex eigenvalues and are found to be $c_{7}^{(h)}= 0.12997$
and $c_{7}^{(H)}=0.28328$.  It follows from these critical values that in
the oscillatory regime, the unstable steady state is always in the
ascending part of the $z_{2}$-pseudo-nullcline.  However, as we have
$c_{7}^{(m)}<c_{7}^{(h)}<c_{7}^{(H)}<c_{7}^{(M)}$, we can also find stable
steady states on this ascending part.

\begin{figure}[t]
  \centering
  \includegraphics[width=0.5\textwidth]{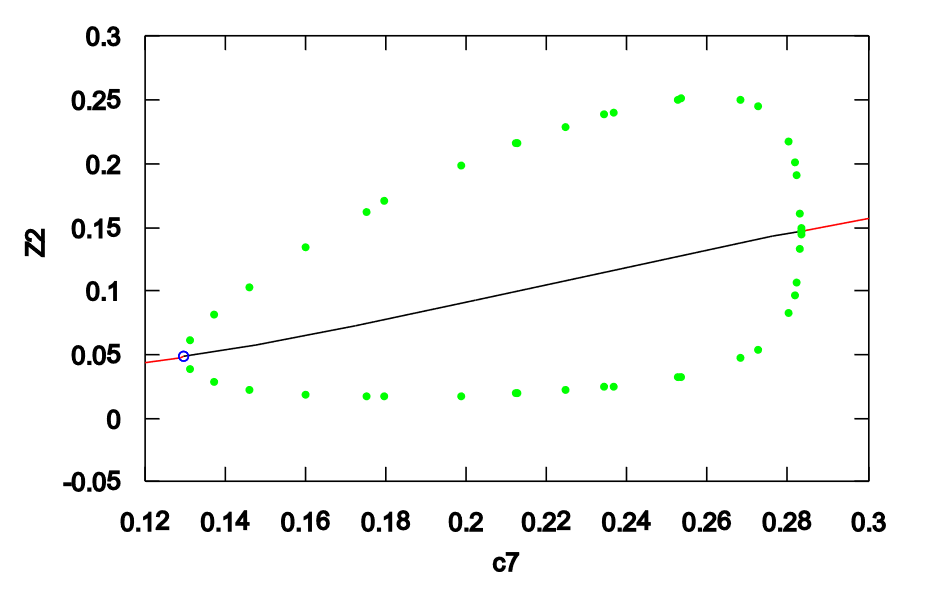}
  \caption{Bifurcation diagram for $c_7$.  Green dots indicate the size of
    the limit cycle; the black line shows unstable steady states, the red
    line stable ones.}
   \label{fig:bifc7}
\end{figure}

\section{Comparison with Experimental Results}\label{exr}

As frequently discussed in the literature -- see e.\,g.\
\citep{deneke2016waves}, \citep{farrell2014egg}, \citep{shindo2021excess}
or \citep{shindo2021modeling} -- the period length in the early
embryogenesis of \emph{D. melanogaster} gets longer and longer
after cycle $10$, and this can be seen by decreasing the synthesis of
cyclin slightly in each cycle (see \citep{shindo2021excess}).  The early
embryonic cell cycles of \emph{D. melanogaster} only consist of S
and M phases, which causes very quick divisions.  However, at some point,
the gap phases G2 are introduced to the cell cycle, which lengthens the
duration of the cycle. For instance, the appearance of G2 in cell cycle 14
is accompanied by the inhibitory phosphorylation of almost all CDK.  The
downregulation of CDK, which results from the destruction of cyclin, causes
this G2 phase, and such extensive inhibitory phosphorylation does not
happen in the early cycles.  The other reason is that after the first 10
cycles, which take place in the inner part of the embryo, the nuclei
migrates into the periphery of the embryo; this might lead to the
lengthening of the cell cycle after cycle $10$.

In our mathematical model, we have so far always treated the cyclin
synthesis, i.\,e.\ the parameter $c_{1}$, as a constant and then obtained
for suitable parameters a sustained oscillation implying in particular that
the period length remains constant in contrast to the experimental
findings.  We now treat $c_{1}$ as a function of time; one may consider
this as a kind of heuristic reduction of a more complicated model involving
also a differential equation for $c_{1}$.  We assume that cyclin is created
in some process and then degraded in another process and that both process
operate on roughly the same time scale.  It is well-known -- e.\,g.\  from
the kinetics of the ABC reaction -- that in such a situation the
concentration of the  intermediate product, in our case cyclin, can be
approximately described by the dynamics
\begin{equation}
  c_1(\tau) = c_{1}^{(0)} \tau e^{-\gamma \tau}\,.
  \label{b1ed}
\end{equation}
Here the factor $\tau$ describes that cyclin has first to be produced and
then degrades exponentially according to the second factor with a constant
rate $\gamma$.   This functions increases from zero until it reaches at
$\tau=1/\gamma$ a maximum value of $e^{-1}c_{1}^{(0)} /\gamma$; then it
declines monotonically towards zero.  By our results above, this means that
we can expect after some transient phase a lengthening of the period of the
oscillation. 

\begin{figure}[t]
  \centering
  \begin{subfigure}{0.45\textwidth}
    \centering
    \includegraphics[width=\textwidth]{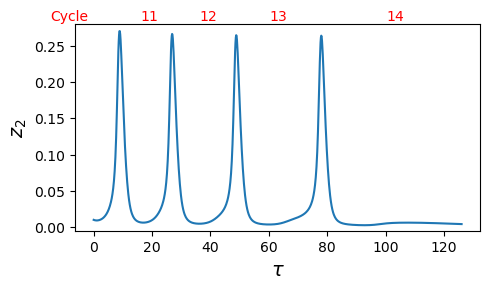}
    \caption{Numerical simulation starting after cycle 10 with
      $c_{1}^{(0)}=0.0001$ and $\gamma=0.055$.}\label{bt}
  \end{subfigure}\qquad
  \begin{subfigure}{0.45\textwidth}
    \centering
    \includegraphics[width=\textwidth]{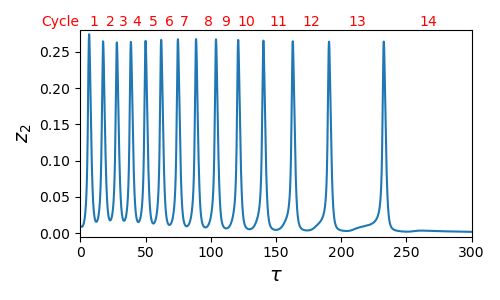}
    \caption{Numerical simulation of first 14 cycles with $c_{1}^{(0)}=0.0001$
      and $\gamma=0.048$.}\label{btt}
  \end{subfigure}\\[0.5\baselineskip]
  \begin{subfigure}{0.45\textwidth}
    \centering
    \includegraphics[width=\textwidth]{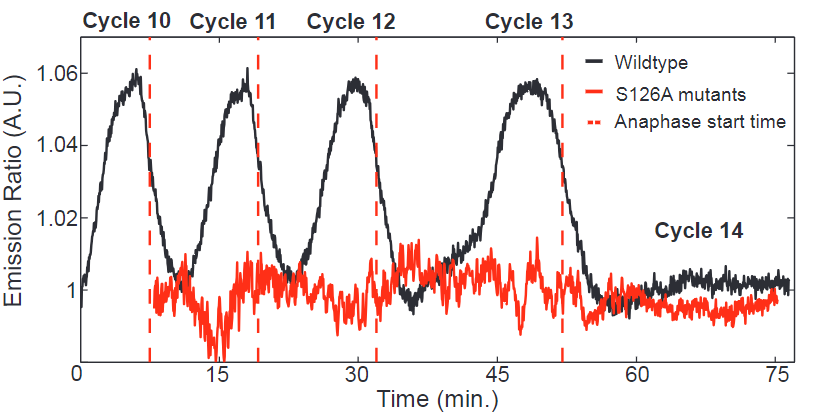}
    \caption{Experimental results from \citet{deneke2016waves}.}\label{all2}
  \end{subfigure}\qquad
  \begin{subfigure}{0.45\textwidth}
    \centering
    \includegraphics[width=\textwidth]{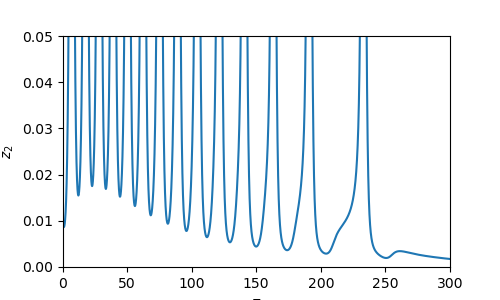}
    \caption{Enlarged bottom part of above plot.}\label{bttz}
  \end{subfigure}
  \caption{Top row: Time evolution of CDK concentration (i.\,e.\ $z_2$) for
    an exponentially decreasing cyclin synthesis~$c_{1}$. Left: cycles 11
    to 14; right: first 14 cycles.  Bottom row: Left: experimental results for cell
    cycles 10 to 14 taken from \citet{deneke2016waves}. Right: Zoom into
    bottom part of \eqref{btt}.}
  \label{ts}
\end{figure}

Figures~\ref{bt} and~\ref{btt} show the results of numerical simulations
with such a variable cyclin synthesis.  In a first simulation, we started
from cycle $1$ using the parameter values $c_{1}^{(0)}=0.0001$ and
$\gamma=0.048$ (Figure~\ref{btt}).  Here one can see that in the first 10
cycles, the period length grows only very moderately, whereas in the last
four cycles a significant increase occurs.  Figure~\ref{bttz} zooms into
the bottom part of the plot and shows that from cycle 2 on, the troughs are
getting deeper and deeper.  This represents exactly the behavior observed
in experiments \citep{deneke2016waves}.  In \citep{edgar1994distinct}, a
similar behavior was reported for CycB, but as CycB is activating CDK, this
is equivalent to our finding.  If one plots the projection of the
trajectory to the $z_2$-$z_3$-plane, it seems, as if the 14 cycles are
almost identical.  But in a 3D plot (not shown here) one can see that over
the 14 cycles the Wee1 level has significantly increased. We are not aware
of any experimental evidence for such a behavior so that it is unclear
whether this represents an artifact of our model or the chosen parameter
values.

Since many studies concentrate on cycles $10$ to $14$, we performed a
second simulation with the parameter values $c_{1}^{(0)}=0.0001$ and
$\gamma=0.048$ to capture the behaviors of these cycles (Figure~\ref{bt}).
For simplicity, we used here actually a simple exponential decline of
$c_{1}$, i.\,e.\ we omitted the factor $\tau$, as it is mainly relevant at
the beginning of embryogenesis.  One can clearly observe a significant
increase in the period length from cycle to cycle.  In both simulations, no
oscillations occur after cycle 14.

Our results align qualitatively quite well with the experimental results of
\citep{deneke2016waves} shown in Figure~\ref{all2}.  In this article, they
demonstrated how the CDK signal shows clear oscillations corresponding to
cell-cycle progression.  While the shape of the oscillation is different in
our simulations compared to the experiment, the periodic behavior is well
reproduced and also the fact that the oscillations stop after cycle~14.
Furthermore, one can in both simulation and experiment clearly distinguish
the S and the M phases: the S phases are characterised by low CDK levels
and the M phases by high CDK levels.  In our model, the S phases are much
longer than the M phases which is in contrast to the experimental findings.
Perhaps this could be corrected by changing some of the other parameters.

\section{Conclusion}\label{concl}

We presented a mathematical model for the cell cycle dynamics in
\emph{D. melanogaster} during early embryogenesis by reviewing
experimental findings and translating them into a biochemically sound model
of CDK regulation with positive and negative feedback loops.  The raw model
consists of six ordinary differential equations depending on fifteen
parameters and three Hill coefficients.  Using exact reduction methods like
conservation laws and dimensional analysis and approximate methods like a
quasi-steady state assumption, we reduced to a model consisting of three
differential equations depending on nine parameters and three Hill
coefficients.  We then performed a bifurcation analysis with respect to two
key parameters, the cyclin synthesis and the activation coefficient for
APC, for proving the existence of regions in the parameter space where the
solutions exhibit limit cycle oscillations as observed in experiments.  We
found a single Hopf bifurcation when changing the cyclin synthesis
entailing that oscillations occur only when its concentration drops below a
certain threshold.  When changing the activation coefficient for APC, a
Hopf bubble appears, indicating that it must take its values in a certain
finite interval for oscillations.

In numerical simulations, we investigated in particular the dependence of
the period length on the cyclin synthesis.  The results support our
biological hypothesis that the period length increases with decreasing
cyclin synthesis by showing a simple inverse power law dependency.  For a
comparison with experimental results, we prescribed a concrete dynamics for
the cyclin synthesis by considering it as a time dependent function.
Qualitatively, the simulation results agreed fairly well with experimental
data reported in the literature.  We could observe for suitable parameter
values the same increase of the period length over the first 14 cycles and
the oscillations stopped after cycle 14.  We could also observe that the
CDK minima got lower in each cycle.  Not so well represented was the shape
of the oscillations.  In particular, the M phases were much shorter than
the S phases.  One can hope that this could be changed by modifications of
other model parameters.

According to \citep{novak1993numerical}, \citep{pomerening2003building} and
references therein, the embryogenesis of \emph{Xenopus laevis} exhibits a
similar type of oscillations.  In yeast, Swe1 regulates Cdc28 in a manner
analogous to the Wee1-mediated Cdk inhibition in \emph{Drosophila
  melanogaster} \citep{causton2015metabolic}.  Our numerical simulations for
Wee1 qualitatively agrees with the experimental results reported for Swe1
of the yeast cell cycle in \citep{causton2015metabolic}.  With small
modifications, our model should thus also work well for the early cell cycles of
certain other organisms.

\bibliography{Dro}
\bibliographystyle{elsarticle-harv}
\end{document}